\documentclass{llncs}

\newcommand{\Oh}[1]
    {\ensuremath{\mathcal{O}\!\left( {#1} \right)}}
\newcommand{\occ}
    {\ensuremath{\mathit{occ}}}

\begin{document}

\title{Indexes for Jumbled Pattern Matching\\in Strings, Trees and Graphs}
\author{Ferdinando Cicalese\inst{1}
    \and Travis Gagie\inst{2}\fnmsep\inst{3}
    \and Emanuele Giaquinta\inst{2}\fnmsep\thanks{Supported by Academy of Finland grant 118653 (ALGODAN).}
    \and Eduardo Sany Laber\inst{4}
    \and Zsuzsanna Lipt\'ak\inst{5}
    \and Romeo Rizzi\inst{5}
    \and Alexandru I.\ Tomescu\inst{2}\fnmsep\inst{3}\fnmsep\thanks{Supported by Academy of Finland grant 250345 (CoECGR).}}
\institute{Department of Computer Science, University of Salerno, Italy
    \and Department of Computer Science, University of Helsinki, Finland
    \and Helsinki Institute for Information Technology HIIT, Finland
    \and Department of Computer Science, PUC Rio de Janeiro, Brazil
    \and Department of Computer Science, University of Verona, Italy}
\maketitle

\begin{abstract}
We consider how to index strings, trees and graphs for jumbled pattern matching when we are asked to return a match if one exists.  For example, we show how, given a tree containing two colours, we can build a quadratic-space index with which we can find a match in time proportional to the size of the match.  We also show how we need only linear space if we are content with approximate matches.
\end{abstract}

\section{Introduction} \label{sec:introduction}

Suppose we are given a connected graph $G$ on $n$ coloured nodes and a multiset $M$ of colours and asked to find a connected subgraph of $G$ whose nodes' colours are exactly those in $M$, if such a subgraph exists.
Even when $G$ is a tree there can be exponentially many such matching subgraphs.  When $G$ is a path, however,
there are $\Oh{n}$ matches and we can find them all in $\Oh{n}$ time~\cite{BEL04}.  When $G$ is a path containing a constant number of colours, in $\Oh{n^2}$ time we can build a \(o (n^2)\)-space index with which we can determine in \(o (n)\) time whether there is a match~\cite{KR13}.  When $G$ is a path containing only two colours, in $\Oh{n^2 / \log^2 n}$ time we can build an $\Oh{n}$-bit index with which we can determine in $\Oh{1}$ time whether there is a match~\cite{CFL09,BCFL12,MR12,GHLW13}.  It follows that in $\Oh{n^2 / \log^2 n}$ time we can build an index of size  $\Oh{n \log n}$-bits  with which we can find all the matches using $\Oh{|M|}$ worst-case time per match~\cite{GHLW13}.  We can build an approximation of this index in $\Oh{n^{1 + \epsilon}}$ time with the quality of the approximation depending on $\epsilon$~\cite{CLWY12}.  Throughout this paper our model is the word-RAM with \(\Omega (\log n)\)-bit words and we measure space in words unless stated otherwise.

Determining whether there is a match is NP-complete even when $G$ is a tree~\cite{LFS06} or when it contains only two colours, but takes polynomial time when $G$ both has bounded treewidth and contains only a constant number of colours~\cite{FFHV11}.  When $G$ contains only two colours there exists an $\Oh{n}$-bit index with which we can determine in $\Oh{1}$ time whether there is a match~\cite{GHLW13}.  Building this index is NP-hard in general but, since finding a match is self-reducible, takes polynomial time when $G$ has bounded treewidth and $\Oh{n^2 / \log^2 n}$ time when $G$ is a tree.  At the cost of increasing the space to $\Oh{n}$ words, this index can be generalized to return a subset of the nodes in the matches that is also a hitting set for all the matches, using $\Oh{\log n}$ time worst-case time per match.  In the worst case, however, this subset of nodes is of little use in finding even a single complete match.

We start by presenting some basic tradeoffs in Section~\ref{sec:basics}.  In Sections~\ref{sec:strings} to~\ref{sec:trees approx} we assume $G$ contains only two colours.  In Section~\ref{sec:strings} we consider the case when $G$ is a path --- i.e., a binary string --- and describe an $\Oh{n}$-space index with which we can find a match in $\Oh{\log n}$ time.  In Section~\ref{sec:trees exact} we consider the case when $G$ is a tree and, based on our index for binary strings, describe an $\Oh{n^2}$-space index with which we can find a match in $\Oh{|M|}$ time.  If we are concerned only with multisets of size at most $n^{1 / 2}$, then we can reduce the space bound to $\Oh{n}$.  In Section~\ref{sec:trees approx} we show that we can achieve the same space bound if we are content with approximate matches.  In the full version of this paper we will partially extend our results to graphs, by working on spanning trees.

\section{Basic Tradeoffs} \label{sec:basics}

Suppose $G$ is a graph containing a constant number $c$ of colours and we will be given $M$ as the vector of length $c$ whose components are the frequencies of the characters, which is called the Parikh vector for $M$.  Since there are \({m + c - 1 \choose c - 1} = \Oh{m^{c - 1}}\) possible multisets of size $m$ and it takes $\Oh{m}$ space to store pointers to a match for such a multiset, there exists an $\Oh{n^{c + 1}}$-space index with which we can find a match in $\Oh{|M|}$ time.  When $G$ has bounded treewidth we can build this index in polynomial time, and we can reduce the space bound to $\Oh{n}$ at the cost of increasing the query time to $|M|^{\mathcal{O} (1)}$.  To do the latter, we store $G$ itself and pre-compute and store pointers to matches only for multisets of size at most $n^{1 / (c + 1)}$.  Given a multiset $M$ with \(|M| > n^{1 / (c + 1)}\), we search $G$ in \(n^{\mathcal{O} (1)} = |M|^{\mathcal{O} (1)}\) time.

For any positive constant $\epsilon$, we can build an $\Oh{n \log^c n}$-space approximate index with which, if $M$ has an exact match, then in $\Oh{1}$ time we can find a substring whose Parikh vector differs from $M$'s by at most a factor of \(1 + \epsilon\) in each component.  (This index does not tell us whether $M$ has an exact match, however, since we may find such a substring even when it does not.)  Without loss of generality, assume we are concerned only with multisets in which each character appears at least once; we can reduce the general case to \(2^c = \Oh{1}\) instances of this one.  We store a $c$-dimensional grid with each side having length \(\lfloor \log_{1 + \epsilon} n \rfloor + 1\).  For each point \((x_0, \ldots, x_{c - 1})\) in this grid, we store pointers to the nodes in a connected subgraph whose Parikh vector is between \(\left( \rule{0ex}{2ex} (1 + \epsilon)^{x_0}, \ldots, (1 + \epsilon)^{x_{c - 1}} \right)\) and \(\left( \rule{0ex}{2ex} (1 + \epsilon)^{x_0 + 1}, \ldots, (1 + \
epsilon)^{x_{c - 1} + 1} \right)\).  This takes a total of $\Oh{n \log^c n}$ space.  Given the Parikh vector \((v_0, \ldots, v_{c - 1})\) of $M$, we return the subgraph stored for the point \(\left( \rule{0ex}{2ex} \lfloor \log_{1 + \epsilon} v_0 \rfloor, \ldots, \lfloor \log_{1 + \epsilon} v_{c - 1} \rfloor \right)\) in the grid, if that subgraph exists.  We summarize these basic tradeoffs in the following lemma:

\begin{lemma} \label{lem:basics}
When $G$ is a graph containing a constant number $c$ of colours there exists an $\Oh{n^{c + 1}}$-space index with which we can find a match in $\Oh{|M|}$ time.  For any positive constant $\epsilon$ there exists an $\Oh{n \log^c n}$-space index with which in $\Oh{|M|}$ time we can find an approximate match in which each colour's frequency is within a factor of \(1 + \epsilon\) of its frequency in $M$.  When $G$ has bounded treewidth we can build these indexes in polynomial time and, moreover, we can reduce the space of the exact index to $\Oh{n}$ at the cost of increasing the query time to $|M|^{\mathcal{O} (1)}$.
\end{lemma}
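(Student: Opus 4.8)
The plan is to prove all four claims by brute-force tabulation, so essentially all of the work is in the counting. For the exact index I would first enumerate the possible Parikh vectors: a multiset of size exactly $m$ over $c$ colours is a vector of $c$ nonnegative integers summing to $m$, of which there are ${m + c - 1 \choose c - 1} = \Oh{m^{c-1}}$, so over all $m$ from $0$ to $n$ there are $\Oh{n^c}$ vectors in total. For each vector, if a connected matching subgraph exists, I store pointers to its $\Oh{m}$ nodes in a table indexed directly by the vector's components. The total space is then $\sum_{m=0}^{n} \Oh{m^{c-1}} \cdot \Oh{m} = \sum_{m=0}^{n} \Oh{m^c} = \Oh{n^{c+1}}$, and a query indexes into the table and copies out the stored match in $\Oh{|M|}$ time.

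For the approximate index, the grid construction sketched above does the job: the $c$-dimensional grid has side length $\lfloor \log_{1+\epsilon} n \rfloor + 1 = \Oh{\log n}$ and hence $\Oh{\log^c n}$ cells, and for each cell I store (if one exists) a connected subgraph whose Parikh vector lies componentwise between $(1+\epsilon)^{x_i}$ and $(1+\epsilon)^{x_i+1}$. Since each stored subgraph has $\Oh{n}$ nodes, the total space is $\Oh{n \log^c n}$, and a query maps $M$'s Parikh vector $(v_0, \ldots, v_{c-1})$ to the cell $(\lfloor \log_{1+\epsilon} v_0 \rfloor, \ldots, \lfloor \log_{1+\epsilon} v_{c-1} \rfloor)$ and reports the subgraph there in $\Oh{|M|}$ time (its size is within a constant factor of $|M|$). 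The boundary subtlety is that the logarithm is undefined when some colour has frequency $0$; I would dispose of this exactly as indicated above, building $2^c = \Oh{1}$ grids, one per subset of colours required to appear, so that within each grid every coordinate is at least $1$.

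For the bounded-treewidth claim I would invoke the cited fact that finding a match takes polynomial time when $G$ has bounded treewidth and $c$ is constant, and then fill each of the $\Oh{n^c}$ table entries (resp.\ $\Oh{\log^c n}$ grid cells) by one call to that algorithm; since there are polynomially many entries, the whole construction runs in polynomial time. Finally, for the $\Oh{n}$-space exact index the idea is to play a threshold against itself: store $G$ in $\Oh{n}$ space and tabulate matches only for multisets of size at most $n^{1/(c+1)}$. There are $\Oh{n^{c/(c+1)}}$ such multisets, each requiring $\Oh{n^{1/(c+1)}}$ space, for $\Oh{n}$ in total; a query with $|M| \le n^{1/(c+1)}$ is answered by a lookup, while one with $|M| > n^{1/(c+1)}$ is answered by searching $G$ from scratch in $n^{\mathcal{O}(1)} = |M|^{\mathcal{O}(1)}$ time, the last equality holding precisely because the threshold forces $n < |M|^{c+1}$.

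I expect this last balancing to be the only step requiring actual thought about the parameters: the exponent $1/(c+1)$ is chosen so that the tabulation space ($\Oh{n^{c/(c+1)} \cdot n^{1/(c+1)}} = \Oh{n}$) and the fallback recomputation time both stay within budget, and any other split would spoil one of the two. Everything else — the stars-and-bars count, the geometric sum, and the grid's size — is routine, and the approximate index's zero-frequency handling is the only other place where I would slow down to check correctness.
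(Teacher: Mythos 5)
Your proposal is correct and follows essentially the same route as the paper: the stars-and-bars count summed over sizes for the $\Oh{n^{c+1}}$ exact index, the $c$-dimensional $\left( \lfloor \log_{1+\epsilon} n \rfloor + 1 \right)$-sided grid with the $2^c$-instance reduction for zero frequencies, and the $n^{1/(c+1)}$ size threshold (store $G$ plus tabulated matches for small multisets, fall back to a direct polynomial-time search otherwise) for the $\Oh{n}$-space bounded-treewidth variant. The only additions you make are spelling out the space calculations that the paper leaves implicit, which are exactly right.
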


When $G$ is a path --- which we can think of as a string over an alphabet of $c$ characters --- we can improve these bounds.  Since $G$ contains $\Oh{n^2}$ substrings and we can specify any substring by its two endpoints, we can build an $\Oh{n^2}$-space index with which we can find a match in $\Oh{1}$ time.  Calculation shows we can reduce the space bound to $\Oh{n}$ at the cost of increasing the query time to $\Oh{|M|^c}$, and we can store an approximate index in $\Oh{\log^c n}$ space.  In Appendix~\ref{app:multicoloured} we show how in $\Oh{n^{1 + \epsilon}}$ expected time we can build an index with which we can find all $\occ$ matches of $M$ in $\Oh{|M|^{1 / \epsilon} + \occ}$ time.

As an aside, we note that we can extend our approximate indexes to support approximate scaled-then-permuted pattern matching (see~\cite{BEL04}).  To do this, for each point \((x_0, \ldots, x_{c - 1})\) in the grid for which there is no subgraph whose Parikh vector is between \(\left( \rule{0ex}{2ex} (1 + \epsilon)^{x_0}, \ldots, (1 + \epsilon)^{x_{c - 1}} \right)\) and \(\left( \rule{0ex}{2ex} (1 + \epsilon)^{x_0 + 1}, \ldots, (1 + \epsilon)^{x_{c - 1} + 1} \right)\), we store pointers to the nodes in a connected subgraph (if there is one) whose Parikh vector is a multiple of a one between \(\left( \rule{0ex}{2ex} (1 + \epsilon)^{x_0}, \ldots, (1 + \epsilon)^{x_{c - 1}} \right)\) and \(\left( \rule{0ex}{2ex} (1 + \epsilon)^{x_0 + 1}, \ldots, (1 + \epsilon)^{x_{c - 1} + 1} \right)\).  The query time is still proportional to the size of the match returned but that may now be larger than $|M|$.

\section{An Index for Binary Strings} \label{sec:strings}

Suppose $G$ is a binary string, i.e., \(G [1..n] \in \{0, 1\}^*\).  If there are $p$ copies of 1 in \(G [i..i + m - 1]\) and $r$ copies of 1 in \(G [k..k + m - 1]\), then for every value $q$ between $p$ and $r$ there is a position $j$ between $i$ and $k$ such that \(G [j..j + m - 1]\) contains $q$ copies of 1.  This observation was the basis for the index in~\cite{CFL09} and is the basis for ours as well.

We store an $\Oh{1}$-time rank data structure for $G$ and, for \(1 \leq m \leq n\), we store the endpoints of two substrings of length $m$ in $G$ with the most and with the fewest copies of 1.  This takes a total of $\Oh{n}$ space. Given a Parikh vector \((v_0, v_1)\), we look up the left endpoints $i$ and $j$ of the substrings of length \(v_0 + v_1\) in $G$ with the most and with the fewest copies of 1.  We set $i$ and $j$ as the initial endpoints for a binary search: at each step, we use two rank queries to find the number $q$ of 1s in \(G \left[ \left\lfloor \frac{i + j}{2} \right\rfloor..\left\lfloor \frac{i + j}{2} \right\rfloor + v_0 + v_1 - 1 \right]\); if \(q = v_1\) then we stop and report this substring by its endpoints; if \(q < v_1\) then we set \(i = \lfloor (i + j) / 2 \rfloor\) and continue; if \(q > v_1\) then we set \(j = \lfloor (i + j) / 2 \rfloor\) and continue.  This search takes a total of $\Oh{\log n}$ time.

\begin{theorem} \label{thm:strings}
When $G$ is a path containing only two colours, we can build an $\Oh{n}$-space index with which we can find a match in $\Oh{\log n}$ time.
\end{theorem}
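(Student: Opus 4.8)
The plan is to build the proof on the discrete intermediate value property that the text highlights, and to show that it is exactly what makes a binary search correct even though the relevant count is not monotone. First I would establish the key combinatorial fact cleanly. Fix a window length $m$ and let $f(\ell)$ denote the number of 1s in $G[\ell..\ell + m - 1]$. Sliding the window one position to the right deletes $G[\ell]$ and inserts $G[\ell + m]$, so $|f(\ell + 1) - f(\ell)| \le 1$; that is, $f$ is $1$-Lipschitz over the integers. An immediate induction then gives the stated property: for any two positions with counts $p$ and $r$, every integer value between $p$ and $r$ is attained by $f$ at some intermediate position. This is the engine of the whole argument.

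Next I would verify the resources of the index. The rank structure occupies \Oh{n} space and answers each prefix-count query in \Oh{1} time, so the number of 1s in any window is available in constant time by subtracting two ranks. For every length $m$ from $1$ to $n$ I store only the start positions of a window of length $m$ with the fewest 1s and of one with the most; this is two integers per length, hence \Oh{n} space in total, and all of these extremal positions can be precomputed by sliding each window once across $G$.

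The heart of the proof is the correctness of the query. Given the Parikh vector $(v_0, v_1)$ I set $m = v_0 + v_1$ and read off the stored start positions $a$ and $b$ of the minimum- and maximum-count windows, so $f(a) \le f(b)$. If $v_1 \notin [f(a), f(b)]$ then by $1$-Lipschitzness no window of length $m$ has exactly $v_1$ ones and I report that there is no match; otherwise I run a bisection on the interval of start positions bounded by $a$ and $b$. I would phrase the search so that it maintains the invariant that its two current endpoints bracket the target, i.e.\ one endpoint has count $\le v_1$ and the other has count $\ge v_1$. At each step I evaluate $f$ at the midpoint with two rank queries and replace whichever endpoint preserves the bracket, halting as soon as a midpoint count equals $v_1$.

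The main obstacle is precisely the correctness of this bisection: because $f$ need not be monotone, the usual binary-search argument does not apply directly, and I must instead invoke the intermediate value property to guarantee that the retained half still contains a start position achieving count $v_1$. Concretely I would prove, by induction on the number of steps, that the bracketing invariant is preserved and that a solution always lies in the current interval, and I would treat the floor in the midpoint computation with care so that the interval strictly shrinks and the search cannot stall at an endpoint. Since the initial interval has length at most $n$, this terminates after \Oh{\log n} iterations, each costing \Oh{1} time, and returns a window whose count is exactly $v_1$, reported by its two endpoints. Combining the \Oh{n} space of the two components with the \Oh{\log n} query time yields the theorem.
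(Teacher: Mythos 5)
Your proposal is correct and follows essentially the same approach as the paper: an $\Oh{1}$-time rank structure plus, for each length $m$, the start positions of windows with the fewest and most 1s, and a binary search between them whose correctness rests on the sliding-window intermediate value property (the $1$-Lipschitz count function). The extra care you take in spelling out the bracketing invariant and the handling of the floor in the midpoint is a welcome elaboration of details the paper leaves implicit, but it does not change the argument.
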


\section{Exact Indexes for Trees with Two Colours} \label{sec:trees exact}

Suppose $G$ is a tree containing only two colours, black and white.  Gagie, Hermelin, Landau and Weimann~\cite{GHLW13} noted that the observation in Section~\ref{sec:strings} can be extended to connected graphs: if there are connected subgraphs $H_p$ and $H_r$ in $G$ with $m$ nodes each and $p$ and $r$ white nodes, respectively, then for every value $q$ between $p$ and $r$, there is a connected subgraph $H_q$ with $m$ nodes and $q$ white nodes.

To see why, notice that we can construct a sequence of connected subgraphs with $m$ nodes such that the sequence starts with $H_p$ and ends with $H_r$ and any consecutive pair of subgraphs in the sequence differ on two nodes.  To build this sequence, we find a path between $H_p$ and $H_r$.  We root $H_p$ and $H_r$, which are trees themselves, at the first and last nodes in the path (or at a shared node, if they are not disjoint).  One by one, we remove nodes bottom-up in $H_p$ and add nodes along the path; remove nodes nearest to $H_p$ in the path and add nodes further along the path; then remove nodes from the path and add nodes top-down in $H_r$.

Suppose $p$ and $r$ are the minimum and maximum numbers of white nodes in any connected subgraphs of size $m$, and we store a path consisting of the nodes in $H_p$ in bottom-up order, followed by the nodes in the path, followed by the nodes in $H_r$ in top-down order.  If we apply Theorem~\ref{thm:strings} to this path, then we obtain an $\Oh{n}$-space index with which, given the Parikh vector for a multiset $M$ with \(|M| = m\), we can find a match in the graph $G$ in $\Oh{\log n + |M|}$ time.  Notice that, if \(|M| < \log n\), then we can simply store an $\Oh{\log^2 n}$-space lookup table with which we can find a match in $\Oh{|M|}$ time.  Therefore, applying this construction for \(1 \leq m \leq n\), we obtain the following theorem:

\begin{theorem} \label{thm:trees exact}
When $G$ is a tree containing only two colours, we can build an $\Oh{n^2}$-space index with which we can find a match in $\Oh{|M|}$ time.
\end{theorem}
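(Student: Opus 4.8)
The plan is to build one instance of the binary-string index from Theorem~\ref{thm:strings} for every possible match size $m$, and to stitch these together into a single index for the tree $G$. Summing the $\Oh{n}$-space cost of each string index over $1 \leq m \leq n$ gives the claimed $\Oh{n^2}$ space bound.

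First I would fix a size $m$ and argue that among all connected subgraphs of $G$ with exactly $m$ nodes, the numbers of white nodes that actually occur form a contiguous range of integers. The key tool here is the extension of the string observation to connected graphs, noted by Gagie, Hermelin, Landau and Weimann~\cite{GHLW13}: if there are connected subgraphs $H_p$ and $H_r$ on $m$ nodes with $p$ and $r$ white nodes respectively, then for every intermediate value $q$ some connected subgraph $H_q$ on $m$ nodes has exactly $q$ white nodes. So I would let $p$ and $r$ be the minimum and maximum white-counts over all size-$m$ connected subgraphs, take witnessing subgraphs $H_p$ and $H_r$, and then \emph{materialise} the interpolation argument as an explicit walk: a sequence of connected subgraphs, each on $m$ nodes, starting at $H_p$, ending at $H_r$, with consecutive subgraphs differing in exactly two nodes (one dropped, one added). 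Concatenating the nodes visited --- the bottom-up deletion order in $H_p$, then the connecting path, then the top-down insertion order into $H_r$ --- yields a single path (string) on which every length-$m$ window corresponds to a genuine connected size-$m$ subgraph of $G$, and whose window white-counts sweep the entire achievable range $[p,r]$.

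Next I would apply Theorem~\ref{thm:strings} to this constructed path. Given a Parikh vector $(v_0, v_1)$ with $v_0 + v_1 = m$, the string index finds in $\Oh{\log n}$ time a length-$m$ window with exactly $v_1$ white nodes whenever one exists; by construction that window names a connected subgraph of $G$ which is the desired match, and reporting its $m = |M|$ nodes costs $\Oh{|M|}$ time, for a total of $\Oh{\log n + |M|}$. The last refinement handles the case $|M| < \log n$, where the $\Oh{\log n}$ search term would dominate: for these small sizes I would instead precompute and store a lookup table, which fits in $\Oh{\log^2 n}$ space and answers in $\Oh{|M|}$ time, so the query time becomes $\Oh{|M|}$ across all sizes.

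The main obstacle is the construction of the explicit two-node-at-a-time walk from $H_p$ to $H_r$ while keeping every intermediate subgraph both connected and of size exactly $m$; the subtlety is maintaining connectivity at each swap, which is why the argument roots $H_p$ and $H_r$ along a connecting path and removes/adds nodes in bottom-up/top-down order so that a leaf is always the node being dropped and a node adjacent to the current subgraph is always the one being added. Once this walk is in hand, the reduction to the already-proved binary-string index is routine, and aggregating over all $m$ is just a summation.
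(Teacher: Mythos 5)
Your proposal matches the paper's own proof essentially step for step: the same use of the Gagie--Hermelin--Landau--Weimann interpolation between extremal subgraphs $H_p$ and $H_r$, the same materialisation of that interpolation as a string (bottom-up order of $H_p$, connecting path, top-down order of $H_r$) to which Theorem~\ref{thm:strings} is applied, the same $\Oh{\log^2 n}$-space lookup table for $|M| < \log n$, and the same summation over all sizes $m$ to reach $\Oh{n^2}$ space. The proof is correct and there is nothing to add.
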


When \(m \approx n\), we need $\Oh{n}$ space to store subgraphs with the minimum and maximum numbers of white nodes and the path between them.  When \(m \ll n\), however, those subgraphs are small and most of the space is taken up by the path.  We now claim we can store $G$ such that we can support fast rank queries on paths; due to space constraint, we leave the proof to Appendix~\ref{app:proofs}.

\begin{lemma} \label{lem:decomposition}
We can store $G$ in $\Oh{n}$ space such that $q$ rank queries on the path between any two nodes take a total of $\Oh{\log n + q}$ time.
\end{lemma}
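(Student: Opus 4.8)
The plan is to store $G$ using a heavy-path decomposition. First I would compute, in $\Oh{n}$ time, a partition of the edges of $G$ into heavy paths in the sense of Sleator and Tarjan, so that the path from the root to any node crosses only $\Oh{\log n}$ heavy paths. I would lay out the nodes of each heavy path consecutively in an array and equip each such array with a prefix-sum structure over the two colours; since the heavy paths partition the nodes, these arrays and their prefix sums occupy $\Oh{n}$ space in total and let us count the white (or black) nodes in any contiguous sub-range of a single heavy path in $\Oh{1}$ time. Alongside this I would store an $\Oh{n}$-space structure answering lowest-common-ancestor and depth queries in $\Oh{1}$ time.

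Next, given the two endpoints $u$ and $v$ defining the path $P$, I would compute $w = \mathrm{lca}(u, v)$ and then walk from $u$ up to $w$ and from $v$ up to $w$, each time following a maximal stretch of the current heavy path and then crossing a single light edge to the next heavy path. Because each light edge at least halves the subtree size, this yields a decomposition of $P$ into $t = \Oh{\log n}$ segments, each a contiguous range of a heavy path (traversed upward before $w$ and downward after it). I would record the segments in path order together with the running totals of their lengths and of the white nodes they contain; building this annotated list costs $\Oh{\log n}$ time, which accounts for the additive $\Oh{\log n}$ in the claimed bound.

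Finally, to answer a rank query, i.e.\ to count the white nodes among the first $k$ nodes of $P$, I would locate the segment that contains offset $k$ using the running length totals, add the precomputed white count of all preceding segments, and finish with one $\Oh{1}$ range-rank query inside the identified heavy path. The step that needs care is locating, in $\Oh{1}$ time, which of the $t = \Oh{\log n}$ segments a given offset falls into: a naive binary search over the segment boundaries costs $\Oh{\log\log n}$ and would spoil the per-query bound. This is the main obstacle, and I expect to resolve it by observing that the segment boundaries form a sorted set of only $\Oh{\log n}$ integers in $[1, n]$, over which a predecessor query can be answered in $\Oh{1}$ time on a word RAM with $\Omega(\log n)$-bit words using a standard packed, fusion-style predecessor structure; such a structure can be built within the $\Oh{\log n}$ setup budget. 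Charging the $\Oh{\log n}$ preprocessing once and $\Oh{1}$ to each of the $q$ queries then gives the total $\Oh{\log n + q}$.
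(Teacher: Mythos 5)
Your proposal follows essentially the same route as the paper's proof: a heavy-path decomposition with an $\Oh{1}$-time rank structure stored for each heavy path ($\Oh{n}$ space total), an $\Oh{\log n}$-time per-query-path setup that lists the $\Oh{\log n}$ heavy-path intervals composing the path together with their prefix counts, and then a single in-heavy-path rank query per rank query on the path. You are in fact more careful than the paper, whose proof silently assumes one can locate in $\Oh{1}$ time the interval containing a queried offset; your predecessor-structure patch is the right idea, with one caveat: building a plain fusion tree over $\Oh{\log n}$ keys does not obviously fit in the $\Oh{\log n}$ setup budget (node construction costs are polynomial in the word size), so the cleaner citation is to Fredman--Willard Q-heaps/atomic heaps, which support $\Oh{1}$-time predecessor and $\Oh{1}$-time insertion on sets of polylogarithmic size after a one-time $\Oh{n}$-time-and-space table precomputation that fits within the lemma's $\Oh{n}$ space bound.
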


If we store $G$ with Lemma~\ref{lem:decomposition} and store subgraphs with the minimum and maximum numbers of white nodes only for \(1 \leq m \leq n^{1 / 2}\), then our index takes only $\Oh{n}$ space but supports queries only for \(|M| \leq n^{1 / 2}\).  When \(|M| > n^{1 / 2}\) we can use an algorithm by Gagie et al.\ to find a match in \(\Oh{|M| n} = \Oh{|M|^3}\) time.

\begin{corollary} \label{cor:trees exact}
When $G$ is a tree containing only two colours, we can build an $\Oh{n}$-space index with which we can find a match in $\Oh{|M|}$ time when \(|M| \leq n^{1 / 2}\) and in $\Oh{|M|^3}$ time otherwise.
\end{corollary}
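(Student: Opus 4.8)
The plan is to reuse the per-length construction behind Theorem~\ref{thm:trees exact}, but to materialise the explicit data only for short multisets and to handle the long connecting paths implicitly through Lemma~\ref{lem:decomposition}. First I would store $G$ with Lemma~\ref{lem:decomposition}, at cost $\Oh{n}$, so that any batch of $q$ rank queries along a single fixed tree path can be answered in $\Oh{\log n + q}$ total time. Then, for each length $m$ with \(1 \leq m \leq n^{1/2}\), I would precompute a size-$m$ connected subgraph $H_p$ with the fewest white nodes and one $H_r$ with the most, store them as the bottom-up and top-down node lists of Theorem~\ref{thm:trees exact} together with prefix sums of their white-node counts, and store pointers to the two endpoints of the tree path joining them and the values $p$ and $r$. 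Each length $m$ contributes $\Oh{m}$ words, so summing over \(m \leq n^{1/2}\) gives \(\sum_{m=1}^{n^{1/2}} \Oh{m} = \Oh{n}\); adding the $\Oh{\log^2 n}$-space lookup table used in Theorem~\ref{thm:trees exact} for the case \(|M| < \log n\) keeps the total at $\Oh{n}$.

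To answer a query with \(|M| = m \leq n^{1/2}\), the essential observation is that the concatenated path $P$ of Theorem~\ref{thm:trees exact} --- the bottom-up list of $H_p$, then the tree path between the two subgraphs, then the top-down list of $H_r$ --- is never built explicitly: only its two short ends are stored, while its middle segment is exactly the fixed tree path served by Lemma~\ref{lem:decomposition}. I would first test \(v_1 \in [p, r]\) and then run the binary search of Theorem~\ref{thm:strings} over $P$. Consecutive length-$m$ windows of $P$ differ on two nodes, so their white counts change by at most one and the intermediate-value argument underlying that search still applies. Each of the $\Oh{\log n}$ steps counts the white nodes of one length-$m$ window; since a window meets at most the three segments of $P$, this takes $\Oh{1}$ prefix-sum lookups on the stored ends plus $\Oh{1}$ rank queries on the middle path. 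As all these rank queries lie on one fixed path, Lemma~\ref{lem:decomposition} bounds the whole search by $\Oh{\log n}$, after which I would walk the winning window to output its $m$ nodes in $\Oh{m}$ time. For \(m \geq \log n\) this is \(\Oh{\log n + m} = \Oh{|M|}\), and for \(m < \log n\) the lookup table already answers in $\Oh{|M|}$.

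To answer a query with \(|M| > n^{1/2}\), I would discard the precomputed data and instead run the algorithm of Gagie et al.\ that finds a match in $\Oh{|M|\,n}$ time; since \(|M| > n^{1/2}\) forces \(n < |M|^2\), this is $\Oh{|M|^3}$, and the two regimes together give the corollary. I expect the main obstacle to be the time accounting in the short regime: one must check that each length-$m$ window touches only a constant number of the segments of $P$, and, crucially, invoke the \emph{batched} guarantee of Lemma~\ref{lem:decomposition} --- that $\Oh{\log n}$ rank queries on one fixed path cost $\Oh{\log n}$ in aggregate rather than per query --- so the search does not degrade to $\Oh{\log^2 n}$; a secondary point is ensuring that the winning window's middle nodes can be enumerated in $\Oh{m}$ time using the navigation provided by the decomposition rather than by traversing the entire connecting path.
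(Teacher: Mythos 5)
Your proposal is correct and matches the paper's own argument: store $G$ via Lemma~\ref{lem:decomposition}, sample the minimum/maximum-white subgraphs only for sizes \(m \leq n^{1/2}\) (total space \(\Oh{n}\)), answer short queries by the binary search of Theorem~\ref{thm:trees exact} with the connecting path handled implicitly through batched rank queries, and fall back to the \(\Oh{|M| n} = \Oh{|M|^3}\) algorithm of Gagie et al.\ when \(|M| > n^{1/2}\). The paper states this in two sentences; you have simply filled in the same accounting details (the space summation, the batched-query guarantee, and the \(n < |M|^2\) conversion) explicitly.
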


\section{An Approximate Index for Trees with Two Colours} \label{sec:trees approx}

In this section we present our most technical result, which is how to store in $\Oh{n}$ space an approximate index for a tree containing only two colours.  Again, an approximate match is one whose Parikh vector differs from $M$'s by a factor of at most \(1 + \epsilon\) in each component.  In contrast, with Lemma~\ref{lem:basics} we would use $\Oh{n \log^2 n}$ space.  Without loss of generality, assume we are only concerned with multisets in which there are at least as many black nodes as white nodes; we can build a symmetric index for the other case.  Notice that in this case, if we can find a connected subgraph $H$ with the same size as the given multiset $M$ and in which the number of white nodes is within a factor of \(1 + \epsilon\) of the number in $M$, then the number of black nodes in $H$ is also within a factor of \(1 + \epsilon\) of the number in $M$.

Our main idea is to store an $\Oh{n}$-space data structure with which, given a size $m$, we can find two connected subgraphs with size $m$ that have approximately the minimum and maximum numbers of white nodes.  Suppose we store a subgraph with the minimum number of white nodes for each size that is a power of two and for each size such that the minimum number of white nodes is a factor of \(1 + \epsilon\) greater than the number in the preceding stored subgraph.  That is, we store a sequence of \(\lg n\) subgraphs with total size $\Oh{n}$ and a sequence of \(\log_{1 + \epsilon} n\) subgraphs with total size $\Oh{n \log n}$.  The latter sequence of subgraphs has total size $\Oh{n \log n}$ in the worst case because the minimum number of white nodes may stay low until we reach size nearly $n$ and then increase rapidly, causing us to store about \(\log_{1 + \epsilon} n\) subgraphs each of size nearly $n$.  However, we can store this sequence of subgraphs in a total of $\Oh{n}$ space using the following lemma,
which we prove in Appendix~\ref{app:proofs}.  Similarly, we also store a subgraph with the maximum number of white nodes for each size that is a power of two and for each size such that the maximum number of white nodes is a factor of \(1 + \epsilon\) greater than the number in the preceding stored subgraph; this also takes $\Oh{n}$ total space if we  store the subgraphs with the following lemma.

\begin{lemma} \label{lem:subgraphs}
We can store $G$ in $\Oh{n}$ space such that, if $G$ contains a connected subgraph of size $m$ with $w$ white nodes, then we can represent some such subgraph in $\Oh{w}$ space such that recovering this subgraph takes $\Oh{m}$ time.
\end{lemma}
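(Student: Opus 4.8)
The plan is to exploit the fact that a connected subgraph of $G$ is itself a subtree and is therefore determined, up to its black ``padding,'' by its white nodes together with its size. First I would root $G$, assign each node its preorder (DFS) number, and store alongside $G$ an $\Oh{n}$-space structure answering lowest-common-ancestor queries in $\Oh{1}$ time, together with parent pointers and adjacency lists supporting constant-time navigation; all of this fits in $\Oh{n}$ space and is built once. Given a connected subgraph $H$ of size $m$ with $w$ white nodes, let $W$ be its set of white nodes. I would represent $H$ by nothing more than the list of the $w$ elements of $W$, sorted by preorder number, together with the integer $m$; this occupies $\Oh{w}$ space.

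For recovery I would proceed in two phases. In the first phase I reconstruct the minimal subtree $S$ of $G$ that spans $W$. In a tree this subtree is unique and equals the union of the tree paths joining terminals that are consecutive in preorder; since $W$ is stored pre-sorted, I can locate the $\Oh{w}$ branching nodes with $\Oh{w}$ constant-time LCA queries and then list the $\Oh{|S|}$ nodes of $S$ by walking the corresponding paths upward via parent pointers, for a total of $\Oh{|S|} = \Oh{m}$ time. The key structural observation is that, because $S$ is contained in \emph{every} connected subgraph that contains $W$ --- in particular in $H$ --- the white nodes of $S$ are exactly $W$; hence assembling $S$ introduces no spurious white nodes and $S$ already has the correct white count $w$. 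In the second phase I grow $S$ into a subgraph of size exactly $m$ by a deterministic depth-first traversal that, starting from the nodes of $S$ taken in preorder, repeatedly adds adjacent black nodes (and only black nodes) until the total size reaches $m$. This keeps the subgraph connected, never alters the white set, and runs in $\Oh{m}$ time.

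To see that the second phase always succeeds I would argue that every black node of the original $H$ is reachable from $S$ through black nodes: any path inside $H$ from such a node to $S$ uses only black nodes and white nodes of $W$, and all of the latter already lie in $S$, so that path is available to the greedy traversal. Consequently at least the $m-w$ black nodes of $H$ can be added, so the traversal does reach size $m$; the subgraph it outputs is connected, has size $m$, and has white set exactly $W$, hence it is a legitimate subgraph with the required size and white count (in particular, when $w$ is the minimum it is again a minimum-white subgraph, which is all we need for the application). The main obstacle I anticipate is purely one of bookkeeping: keeping both phases within $\Oh{m}$ rather than $\Oh{m + w\log n}$ time. This is precisely why I store $W$ already sorted in preorder and rely on $\Oh{1}$-time LCA, so that assembling the skeleton $S$ costs no more than listing its nodes, and why I use a plain depth-first fill with a fixed node ordering, rather than a priority queue, to add the black padding.
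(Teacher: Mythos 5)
Your overall strategy coincides with the paper's: represent $H$ by pointers to its white nodes, recover the minimal subtree $S$ of $G$ spanning $W$ (which is contained in $H$ and therefore has white set exactly $W$), and then pad $S$ with black nodes until it has $m$ nodes, with feasibility guaranteed by the black nodes of $H$ itself. Your phase 1 is in fact \emph{more} rigorous than the paper's, which simply asserts that the paths between the white nodes can be found in $\Oh{m}$ time; your preorder-sorted terminals plus $\Oh{1}$-time LCA queries supply the missing mechanism. Your feasibility argument for phase 2 is also correct and matches the paper's.

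However, your claimed $\Oh{m}$ running time for the padding phase has a genuine gap: it does not follow from ``adjacency lists supporting constant-time navigation.'' To add \emph{only} black neighbours, your traversal must tell black neighbours apart from white ones, and nodes of a tree can have arbitrarily large degree. Concretely, let $G$ be a star whose centre is white, with $m-1$ black leaves and $n-m$ white leaves, and let $H$ consist of the centre plus the black leaves; then $W$ is just the centre, $S = W$, and with unordered adjacency lists your DFS may scan $\Theta(n)$ white entries of the centre's list before locating the black ones, so recovery costs $\Theta(n)$ rather than $\Oh{m}$. This is precisely the point the paper's proof opens with: store each adjacency list so that \emph{black neighbours precede white neighbours}. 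Then the traversal only ever scans the black prefix of a list, stopping at the first white entry, so every consulted entry is either a black node that can be added or an $\Oh{1}$ terminator, giving $\Oh{1}$ time per added node and $\Oh{m}$ overall. With that one amendment to your preprocessing, your proof is complete and essentially identical to the paper's.
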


If we are given a multiset $M$ such that we have subgraphs of size $|M|$ sampled, then we can proceed as in the proof of Theorem~\ref{thm:trees exact} and find an exact match if there is one.  If we do not have subgraphs of size $|M|$ sampled, then we use our sampled subgraphs to build subgraphs $H_{\min}$ and $H_{\max}$ of size $|M|$ with approximately minimum and maximum numbers of white nodes, then proceed almost as in the proof of Theorem~\ref{thm:trees exact}: if the number of white nodes $H_{\min}$ is larger but within a factor of \(1 + \epsilon\) of the number in $M$, then we return $H_{\min}$; if the number in $H_{\min}$ is more than a factor of \(1 + \epsilon\) larger than the number in $M$, then there is no exact match and we return nothing; if the number of white nodes $H_{\max}$ is smaller but within a factor of \(1 + \epsilon\) of the number in $M$, then we return $H_{\max}$; if the number in $H_{\max}$ is less than a factor of \(1 + \epsilon\) smaller than the number in $M$, then there is no
exact match and we return nothing; in all other cases, we proceed as in Theorem~\ref{thm:trees exact}.

To build $H_{\min}$ we take the next larger subgraph with a minimum number of white nodes and discard nodes until it has size $|M|$ while leaving it connected.  This next larger subgraph has size less than \(2 |M|\), because we sampled for every size that is a power of two; has at most \(1 + \epsilon\) times more white nodes than the subgraph of size $|M|$ with the minimum number of white nodes, because we sampled whenever the minimum number of white nodes increased by a factor of \(1 + \epsilon\); and is a tree, because it is a connected subgraph of a tree.  It follows that discarding nodes takes $\Oh{|M|}$ time and, since discarding nodes cannot increase the number of white nodes, $H_{\min}$ contains at most \(1 + \epsilon\) times the minimum number of white nodes.  To build $H_{\max}$ we take the next smaller subgraph with a maximum number of white nodes and add nodes until it has size $|M|$.  By symmetric arguments, this takes $\Oh{|M|}$ time and, since adding nodes cannot decrease the number of white
nodes, the maximum number of white nodes in a subgraph of size $|M|$ is at most \(1 + \epsilon\) times the number in $H_{\max}$.  Finding the path from $H_{\min}$ to $H_{\max}$ takes $\Oh{|M|}$ time using the representation from Lemma~\ref{lem:decomposition}.

\begin{theorem} \label{thm:trees approx}
When $G$ is a tree containing only two colours, for any positive constant $\epsilon$ we can build an $\Oh{n}$-space index with which in $\Oh{|M|}$ time we can find an approximate match in which each colour's frequency is within a factor of \(1 + \epsilon\) of its frequency in $M$.
\end{theorem}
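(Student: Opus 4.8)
The plan is to assemble the three ingredients this section has developed --- the intermediate-value property for connected subgraphs, the compressed subgraph storage of Lemma~\ref{lem:subgraphs}, and the path-rank structure of Lemma~\ref{lem:decomposition} --- and then to verify three claims separately: the $\Oh{n}$ space bound, the (one-sided) correctness guarantee, and the $\Oh{|M|}$ query time. Throughout I would assume white is the minority colour ($v_1 \leq v_0$), building a symmetric index for the other case; as the section observes, once the white frequency of a size-$m$ subgraph is within a factor $1+\epsilon$ of $v_1$, its black frequency is within a factor $1 + \Oh{\epsilon}$ of $v_0$, and since $\epsilon$ is an arbitrary positive constant I may rescale it so that this is exactly $1+\epsilon$.

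First I would fix the index. For each of the $\lg n$ power-of-two sizes and each of the $\Oh{\log_{1+\epsilon} n}$ sizes at which the minimum white count jumps by a factor $1+\epsilon$, I store a minimum-white subgraph using Lemma~\ref{lem:subgraphs}, and symmetrically for maximum-white subgraphs; I also store $G$ with Lemma~\ref{lem:decomposition}, and for the regime $|M| < \log n$ a small $\Oh{\log^2 n}$ lookup table as in Theorem~\ref{thm:trees exact}. The space analysis is where Lemma~\ref{lem:subgraphs} does the real work: a subgraph is charged only $\Oh{w}$ for its $w$ white nodes, so the power-of-two samples cost $\sum_i \Oh{2^i} = \Oh{n}$ (their white counts are bounded by their sizes), while the $(1+\epsilon)$-threshold samples have white counts growing by a factor at least $1+\epsilon$ and bounded by $n$, so the geometric sum of their $\Oh{w}$ charges telescopes to $\Oh{n/\epsilon} = \Oh{n}$. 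This is precisely why the naive $\Oh{n \log n}$ bound collapses to $\Oh{n}$.

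Next I would describe the query. Given $M$ with $m = |M|$: if a sample of size exactly $m$ exists, run the exact procedure of Theorem~\ref{thm:trees exact}. Otherwise build $H_{\min}$ by taking the next larger min-sample (of size $< 2m$, by the power-of-two sampling) and discarding leaves down to size $m$, and build $H_{\max}$ by taking the next smaller max-sample and adding adjacent nodes up to size $m$; using that the true minimum and maximum white counts are monotone in $m$ together with the $(1+\epsilon)$-threshold sampling, the resulting counts satisfy $\mathrm{trueMin} \leq w_{\min} \leq (1+\epsilon)\,\mathrm{trueMin}$ and $\mathrm{trueMax}/(1+\epsilon) \leq w_{\max} \leq \mathrm{trueMax}$, each in $\Oh{m}$ time. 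I would then carry out the case analysis exactly as laid out in the section and prove it sound: in the two cases where a boundary subgraph is returned, $H_{\min}$ (respectively $H_{\max}$) is a genuine size-$m$ subgraph whose white count lies within a factor $1+\epsilon$ of $v_1$, hence an approximate match; in the two cases where nothing is returned, the bounds $\mathrm{trueMin} \geq w_{\min}/(1+\epsilon) > v_1$ and $\mathrm{trueMax} \leq (1+\epsilon)\,w_{\max} < v_1$ certify that $v_1 \notin [\mathrm{trueMin}, \mathrm{trueMax}]$, so no exact match exists and the one-sided guarantee is vacuous; and in the remaining middle case $w_{\min} \leq v_1 \leq w_{\max}$, the intermediate-value property yields a size-$m$ subgraph with exactly $v_1$ white nodes, which the binary search of Theorem~\ref{thm:strings} locates along the morphing sequence from $H_{\min}$ to $H_{\max}$.

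Finally, for the time bound I would run the Theorem~\ref{thm:strings} binary search over window start positions of the string (nodes of $H_{\min}$ bottom-up, then the tree path, then nodes of $H_{\max}$ top-down) \emph{without} materialising the whole string: white counts of windows touching the two size-$m$ end blocks are read from $\Oh{m}$-time prefix sums, while white counts along the path come from Lemma~\ref{lem:decomposition} in amortised $\Oh{1}$ time after $\Oh{\log n}$ setup, giving $\Oh{\log n}$ for the search and $\Oh{m}$ to report, i.e.\ $\Oh{\log n + |M|}$; the lookup table removes the $\log n$ term when $|M| < \log n$, yielding $\Oh{|M|}$ overall. I expect the main obstacle to be the correctness bookkeeping of the case analysis --- in particular making the one-sided ``return nothing'' decisions provably sound by propagating the approximation factors of $H_{\min}$ and $H_{\max}$ into strict inequalities on $\mathrm{trueMin}$ and $\mathrm{trueMax}$, and checking that the white-to-black factor translation survives the final rescaling of $\epsilon$ --- rather than the space or time accounting, which follow cleanly from Lemmas~\ref{lem:subgraphs} and~\ref{lem:decomposition}.
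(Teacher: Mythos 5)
Your proposal is correct and takes essentially the same route as the paper's own proof: the same power-of-two plus \((1+\epsilon)\)-jump sampling of minimum- and maximum-white subgraphs stored compactly via Lemma~\ref{lem:subgraphs}, the same construction of \(H_{\min}\) and \(H_{\max}\) by discarding/adding nodes from the neighbouring samples, the same five-way case analysis, and the same \(\Oh{\log n + |M|}\) query bound via Lemma~\ref{lem:decomposition}, the binary search of Theorem~\ref{thm:strings}, and the small-\(|M|\) lookup table. Your extra bookkeeping --- the explicit soundness inequalities for the ``return nothing'' cases, the geometric-sum space accounting, and the rescaling of \(\epsilon\) to handle the black-frequency factor --- only makes explicit what the paper leaves implicit.
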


\bibliographystyle{plain}
\bibliography{jumbled}

\appendix

\section{An Index for Strings over Constant-Size Alphabets} \label{app:multicoloured}

Suppose $G$ is a string over a constant-size alphabet and \(0 < \epsilon \leq 1\).  Then in $\Oh{n^{1 + \epsilon}}$ expected time we can build an index with which, given a multiset $M$ of characters, we can find all $\occ$ matches of $M$ in $\Oh{|M|^{1 / \epsilon} + occ}$ worst-case time.  To do this, we store $G$ itself and, for \(1 \leq m \leq n^\epsilon\), we make a pass over $G$ and store, for each multiset of size $m$ that has a match in $G$, a list of all the locations of that multiset's matches.  Notice the lists for multisets of size $m$ are disjoint and have total length \(n - m + 1\); therefore, with dynamic perfect hashing we use a total of $\Oh{n^{1 + \epsilon}}$ expected time and $\Oh{n^{1 + \epsilon}}$ space.  Given a multiset $M$ with \(|M| \leq n^\epsilon\), we return our pre-computed list of the locations of $M$ matches in $\Oh{|M| + \occ}$ time, or $\Oh{\occ}$ time if we are given $M$ as a Parikh vector.  Given a multiset $M$ with \(|M| > n^\epsilon\), we search $G$ in \(\Oh{n} = \Oh{|M|^{1
/ \epsilon}}\) time.

\section{Proofs of Lemmas~\ref{lem:decomposition} and~\ref{lem:subgraphs}} \label{app:proofs}

\setcounter{lemma}{1}

\begin{lemma}
We can store $G$ in $\Oh{n}$ space such that $q$ rank queries on the path between any two nodes take a total of $\Oh{\log n + q}$ time.
\end{lemma}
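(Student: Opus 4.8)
The plan is to store the tree so that the path between any two nodes is represented implicitly by a small number of contiguous array ranges, over which white-node counts can be accumulated in constant time per range. First I would compute a heavy-path decomposition of $G$ in $\Oh{n}$ time and space, laying the nodes of each heavy path out consecutively in an array and storing, for each heavy path, the prefix sums of its white-node indicator; with these prefix sums we can count the white nodes in any contiguous sub-range of a single heavy path in $\Oh{1}$ time. I would also keep an $\Oh{n}$-space, $\Oh{1}$-query lowest-common-ancestor structure (or answer LCA by walking up the heavy paths, which costs only $\Oh{\log n}$ and so fits within the per-pair setup budget).

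Given the two endpoints $u$ and $v$, the path between them is the concatenation of the upward path from $u$ to $\ell = \mathrm{lca}(u,v)$ and the reversal of the upward path from $v$ to $\ell$. Each upward path meets $\Oh{\log n}$ heavy paths, so the whole path decomposes into $t = \Oh{\log n}$ segments, each a contiguous range of one heavy-path array. I would spend $\Oh{\log n}$ time once, when processing the pair $(u,v)$, to list these segments in path order and to tabulate two arrays of length $t$: the cumulative length of the path up to the start of each segment and the cumulative number of white nodes up to the start of each segment (the latter read off from the per-heavy-path prefix sums). This represents the path --- whose length may be as large as $n$ --- without ever writing out its individual nodes, which is the conceptual key to keeping the setup cost $\Oh{\log n}$ rather than $\Oh{m}$.

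A rank query then asks for the number of white nodes among the first $k$ nodes of the path; to answer it I would locate the segment containing position $k$ by a predecessor search among the $t$ stored segment-start positions, add the cumulative white count recorded for that segment, and finish with one $\Oh{1}$ prefix-sum query inside that segment's heavy path. The cost splits as $\Oh{\log n}$ for the one-time setup plus the cost of the $q$ predecessor searches. The main obstacle is making each query $\Oh{1}$ rather than $\Oh{\log\log n}$: I would exploit the word-RAM model, storing the $t = \Oh{\log n}$ boundaries in a fusion tree, so that because the word size is $\Omega(\log n)$ we have $\log_w t = \Oh{1}$ and each predecessor query takes $\Oh{1}$ time after $\Oh{t} = \Oh{\log n}$ preprocessing folded into the setup, giving $\Oh{\log n + q}$ in total. (In the application of Theorem~\ref{thm:strings} only $q = \Oh{\log n}$ queries ever occur, so even an $\Oh{\log\log n}$-time predecessor search would suffice there; the fusion tree is what makes the clean bound hold for arbitrary $q$.)
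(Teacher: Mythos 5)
Your proposal is correct and takes essentially the same approach as the paper's proof: a heavy-path decomposition with $\Oh{1}$-time rank (prefix-sum) structures on each heavy path, an $\Oh{\log n}$-time per-pair setup that tabulates cumulative lengths and white-node counts over the $\Oh{\log n}$ heavy-path intervals forming the $u$--$v$ path, and then a single heavy-path rank query per path rank query. You are in fact more explicit than the paper on the one point it glosses over --- locating in $\Oh{1}$ time the interval containing a queried path position --- although your claim that a fusion tree over $t$ keys can be built in $\Oh{t}$ time is not supported by the classical construction; an atomic heap (usable after a one-time global $\Oh{n}$ preprocessing, which the lemma's space budget allows) gives the clean $\Oh{\log n + q}$ bound you want.
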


\begin{proof}
We compute the heavy-path decomposition~\cite{ST83} of $G$ and store $\Oh{1}$-time rank data structures for each of the heavy paths, which takes $\Oh{n}$ space.  The path between any two nodes $u$ and $v$ is a sequence of $\Oh{\log n}$ intervals of heavy paths.  Given $u$ and $v$, for each of these intervals we compute the number of white nodes in that interval and to either side of it in the heavy path; this takes a total of $\Oh{\log n}$ time and rank queries on heavy paths.  With this information we can perform any rank query on the path from $u$ to $v$ using a single rank query on a heavy path.
\qed
\end{proof}

\begin{lemma}
We can store $G$ in $\Oh{n}$ space such that, if $G$ contains a connected subgraph of size $m$ with $w$ white nodes, then we can represent some such subgraph in $\Oh{w}$ space such that recovering this subgraph takes $\Oh{m}$ time.
\end{lemma}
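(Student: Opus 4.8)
The plan is to represent the subgraph by a compact skeleton of its white nodes together with a canonical rule for regenerating the black nodes, reusing the heavy-path decomposition of Lemma~\ref{lem:decomposition} as a shared $\Oh{n}$-space structure: it already supports constant-time lowest-common-ancestor queries, and by walking up from each endpoint to their lowest common ancestor using parent pointers it lets us recover the path between any two nodes in time proportional to the path's length. Once for all subgraphs I would also store the adjacency list of every node sorted by identifier, which takes $\Oh{n}$ space in total.

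First I would fix a witness. The hypothesis gives a connected subtree $H$ of size $m$ with $w$ white nodes; let $W$ be its set of white nodes and let $S$ be the minimal subtree of $G$ connecting $W$. Because $G$ is a tree and $H$ connects $W$, we have $S \subseteq H$, so $S$ contains exactly the white nodes $W$ and no others, and $|S| \le m$. Every leaf of $S$ lies in $W$, since a black leaf could be pruned without disconnecting $W$, contradicting minimality; hence $S$ has at most $w$ leaves and at most $w-1$ branch nodes, and contracting its maximal degree-two chains yields a skeleton with $\Oh{w}$ anchor nodes and $\Oh{w}$ chains, each stored as the pair of $G$-nodes at its ends. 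This skeleton, together with the single number $m$, is the representation, of total size $\Oh{w}$ (with one extra black seed node in the degenerate case $w=0$).

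To recover the subgraph I would first rebuild $S$ by expanding each stored chain with one path-recovery query, for a total of $\Oh{|S|} = \Oh{m}$ time, which restores exactly the $w$ white nodes. I would then add the remaining $m - |S|$ nodes by a canonical black-only depth-first search started from $S$: it visits the nodes of $S$ and then their black neighbours outside $S$ in sorted-identifier order, descending recursively and halting as soon as $m$ nodes have been collected. Since every added node is black the white count stays exactly $w$; since each added node attaches to a node already present the result stays connected; and enough black material is available because the $m-|S|$ nodes of $H \setminus S$ are themselves black and reachable from $S$ through black-only paths inside $H$. The point for the time bound is that, although a node may have large degree, a depth-first search that stops after collecting $m$ nodes inspects only $\Oh{m}$ adjacencies: along the active search path and at the node where the budget runs out we read only a prefix of each sorted adjacency list, while every fully explored subtree is entirely charged to the budget. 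Recovery therefore runs in $\Oh{m}$ time.

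The main obstacle is precisely this control of the black padding. The white nodes alone do not pin down a subgraph, and examples such as a black star show that the black part can be arbitrarily bushy, so it cannot be captured by a bounded-size skeleton of branch points; it must instead be regenerated by a growth rule whose cost is charged to the output size rather than to the degrees encountered along the way. The crux is to make the depth-first growth simultaneously canonical and provably $\Oh{m}$-time in the presence of high-degree nodes, while guaranteeing through the choice $S = \mathrm{Steiner}(W)$ of a genuine witness that the recovered subgraph has white count exactly $w$; the remaining bookkeeping — bounding the skeleton by $\Oh{w}$ and recovering each chain by a lowest-common-ancestor walk — is routine given Lemma~\ref{lem:decomposition}.
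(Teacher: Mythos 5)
Your overall plan coincides with the paper's: represent the subgraph by (the Steiner tree of) its white nodes in \Oh{w} space, recover that part in \Oh{m} time, and then pad with black nodes only, which is feasible because every node of \(H \setminus S\) is black and reachable from \(S\) inside \(H\) through black nodes. The genuine gap is in the one step that is not routine: the claim that the truncated black-only DFS ``inspects only \Oh{m} adjacencies.'' With adjacency lists sorted \emph{by identifier}, identifying the black neighbours of a node forces you to inspect (and skip) its white neighbours, and those inspections are charged to nothing in your accounting. Concretely, let \(G\) be a star whose centre is black, with \(n/2\) white leaves carrying small identifiers and one black leaf carrying the largest identifier, and let \(H\) consist of the centre, one white leaf and the black leaf, so \(m = 3\), \(w = 1\). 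Recovery rebuilds \(S\) (the white leaf), adds the centre, and must then scan the centre's list in identifier order, skipping \(\Theta(n)\) white entries before reaching the black leaf. Recovery thus takes \(\Theta(n)\) time, not \Oh{m}; your ``prefix of each sorted adjacency list'' argument only bounds the number of entries \emph{read}, and that number is exactly what blows up.

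The paper's proof turns on precisely the device you are missing: each adjacency list is ordered \emph{by colour}, with black neighbours preceding white ones. Then the expansion scans only black prefixes; every entry inspected is either a black node that gets added, or a node already in the current subgraph (chargeable to an edge of the current subgraph, \Oh{m} in total since \(G\) is a tree), or the first white entry of a list, which terminates that scan at a cost of \Oh{1} per scanned node. This yields \Oh{1} amortized time per added node and hence \Oh{m} recovery overall. With that fix your remaining machinery becomes superfluous: the paper simply stores pointers to the \(w\) white nodes of \(H\) (no chain-contracted skeleton is needed, since that is already \Oh{w} space) and recovers the connecting paths inside the tree in \Oh{m} time. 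A further, minor inaccuracy: heavy-path decomposition by itself gives \Oh{\log n}-time, not constant-time, LCA queries, so your chain recovery as stated costs \(\Oh{m + w \log n}\) unless you add a separate \Oh{1}-time LCA structure (or walk up using parent pointers and depths); this is fixable, but unlike the adjacency-list ordering it is not the crux of the lemma.
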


\begin{proof}
We store the adjacency lists for $G$'s nodes, with each list ordered such that black neighbours precede white neighbours.  With this representation, we can expand a subgraph by adding only black nodes as long as this is possible, using $\Oh{1}$ time per added node.

Let $H$ be a connected subgraph of size $m$ with $w$ white nodes.  We store pointers to the white nodes in $H$, which takes $\Oh{w}$ space.  Since $G$ is a tree, we can find the unique paths between these nodes in a total of $m$ time; notice these paths are contained in $H$ and consist of black nodes.  If the subgraph consisting of the white nodes and these paths has fewer than $m$ nodes, then we add black nodes until it has $m$ nodes, which takes a total of $\Oh{m}$ time.  It is possible to add enough black nodes without adding any white nodes because, e.g., we could add the remaining black nodes in $H$.
\qed
\end{proof}

\end{document}